\newcommand{\mC}[1]{\mathit{#1}}
\newcommand{\mF}[1]{\mathit{#1}}
\newcommand{\Set}{\mC{Set}}
\newcommand{\Id}{\mF{Id}}
\newcommand{\C}{\mC{C}}
\newcommand{\F}{\mF{F}}
\newcommand{\B}{\mF{B}}
\newcommand{\I}{\mC{I}}
\tikzstyle{every picture}+=[descr/.style={fill=white,inner sep=2.5pt},node distance=5.5em]
\newcommand{\obj}[3]{\node (#1) [#2] {$#3$};}
\newcommand{\arr}[3]{\path[->,font=\scriptsize](#2) edge node[auto] {$#1$} (#3);}
\newcommand{\tr}[1]{\stackrel{#1}{\longrightarrow}}
\newcommand{\Pfin}{\mathcal{P}_{\mathit{fin}}}
\renewcommand{\ker}{\mathit{ker}\,}
\newcommand{\quotient}[1]{_{/_{#1}}}
\newcommand{\name}[1]{\emph{(#1)}}
\newcommand{\innerslogan}[2]{\medskip\noindent\textbf{#1 #2}\medskip}
\newcommand{\slogan}[1]{\innerslogan{}{#1}}
\newcommand{\reminder}[1]{\innerslogan{Reminder: }{#1}}
\theoremstyle{plain}
\newtheorem{proposition}{Proposition}
\newtheorem{theorem}{Theorem}
\theoremstyle{definition}
\newtheorem{definition}{Definition}
\newtheorem{example}{Example}
\begin{document}

  \providecommand{\event}{ICE 2011}
  \providecommand{\titlerunning}{Interaction and observation, categorically}
  \providecommand{\authorrunning}{Vincenzo Ciancia}
  \title{\titlerunning}
  \author{\authorrunning\thanks{Research supported by the Netherlands Organization for
    Scientific Research VICI grant 639.073.501}
  \institute{Institute for Logic, Language and Computation \\ University of Amsterdam}}
  \maketitle

\begin{abstract}
This paper proposes to use \emph{dialgebras} to specify the semantics of interactive systems in a natural way. Dialgebras are a conservative extension of coalgebras. In this categorical model, from the point of view that we provide, the notions of observation and interaction are separate features. This is useful, for example, in the specification of process equivalences, which are obtained as kernels of the homomorphisms of dialgebras. As an example we present the asynchronous semantics of the CCS.
\end{abstract}

\section{Introduction}

The notions of \emph{interaction} and \emph{observation} play a key role in the semantics of concurrent and interactive systems. An \emph{interactive} system or process (imagine a web service, or an operating system) is typically not required to terminate, but it is not always equivalent to the deadlocked machine. This is because, along the execution of a system, the external environment is allowed to interact with the program and observe some side effects (typically, output from the system itself). 

However clear in principle, this intuition is lost whenever the semantics of an interactive system is modelled using \emph{labelled transition systems} (LTSs) or their categorical generalisation, the so-called \emph{coalgebras}. The reason is that every interaction that a system makes with the external world, be it originated from the environment, or from an internal action of the system itself, is described in the same way, as a transition from one state to the next.

In this work we turn our attention to a class of categorical models called \emph{dialgebras}. Dialgebras are a straightforward generalisation of both algebras and coalgebras. We interpret these models as a framework where one can describe separately the states of the system, the interactions that the environment and a process may have in each state, and the resulting observations. In our interpretation, dialgebras provide side-effecting operations, therefore providing both contexts and observations simultaneously.

The above is strongly reminiscent of the distinction between input and output in computer science. Thinking of interaction with the environment as an input to a process, and observation as its output, Mealy machines \cite{mea55} come to mind. These are functions $I \times X \to O \times X$, for $X$, $I$ and $O$ the set of states of the system, possible input values, and possible output values, respectively. It turns out that one of the simplest and more familiar examples of a dialgebra is a Mealy machine; in the same fashion, one of the simplest and more familiar examples of coalgebra is an LTS. This motivates the following slogan.

\slogan{Coalgebras generalise labelled transition systems; dialgebras generalise Mealy machines.}

As it happens with coalgebras w.r.t. LTSs, the merit of the generalisation is in the fact that, since dialgebras form a category, these generalised Mealy machines are now equipped with a standard notion of equivalence, which is given by the kernel of morphisms of the category. 

So, in our framework, the semantics of a programming language is given in terms of a dialgebra. The latter, as we will see, is a function $f$ from a set $\F X$ to a set $\B X$. $\F$ and $\B$ are parametrised in $X$, which is the set of states of a system. $\F$ describes a type of \emph{experiments} that an ideal observer can conduct. Then, results are observed, belonging to the set $\B X$ of possible \emph{observations}. The way to define the semantics is by choosing appropriate experiments and observations, and defining such a function $f$. From this information, using a small amount of category theory, a standard equivalence relation, called dialgebraic bisimilarity, is defined on $X$. Roughly speaking, two processes are dialgebraic bisimilar if they exhibit the same observations in the same experiments, and the states they reach after the experiments are bisimilar.

An example where it is useful to distinguish between interaction and observation is \emph{asynchronous} semantics. Asynchronous communication may be summarised by saying that ``the observer can not see the input actions of a process''. More precisely, the observer can not tell input actions from internal computations. In the dialgebraic perspective that we propose on asynchrony,  the observer can either sit and look at the system, seeing its output and internal computations, or try to send messages to it. However, a process can either read a message, or consume a message without actually reading it, and store it for later processing. The observer can not tell the two cases apart.

We provide a dialgebraic semantics of the asynchronous CCS, and prove that the obtained equivalence relation coincides with strong asynchronous bisimilarity. In this case, we make a distinction between an underlying \emph{operational semantics} which is expressed by the well-known LTS for the CCS, and the dialgebraic semantics, built on top of it, which specifies the semantic equivalence relation. Bisimilarity of the LTS of the operational semantics, which is also the \emph{synchronous} semantics, is not taken into account in the definition of the dialgebraic semantics.

Using a LTS is not necessary at all to specify a dialgebra. We do so mostly for the sake of simplicity: the asynchronous LTS semantics of process calculi is already well-understood. The operational semantics could in turn be defined as a dialgebra directly on the structure of processes (see \S \ref{sec:conclusions} for a brief discussion). On the other hand, the usage of a (however specified) operational semantics upon which a process equivalence is based can be considered at least a recurring pattern for the design of process equivalences. The definition of the semantic equivalence may be split in three steps, that we call \emph{execute}, \emph{interact}, \emph{observe}:
\begin{description}
 \item[execute:] the system is run by the means of its operational semantics, specifying some side effects of the process at each state of its execution;
 \item[interact:] the observer does experiments on the running system;
 \item[observe:] results are collected, allowing the observer to classify processes by how they react to experiments, giving rise to the behavioural equivalence of choice.
\end{description}

In coalgebras, these three steps are often tied to each other and not so easily separated. Dialgebras give us a different perspective on bisimilarity, where some actions are originated by a running process, and some others by the external environment.  The process and the environment may be very different, and the syntax of experiments is not (necessarily) the same as the syntax of processes. This is not so uncommon. Think e.g. of analysis or monitoring for security protocols. The entities (systems) that are being ``observed'' may be unknown machines or even human beings. The syntax of experiments conducted on such entities may have nothing in common with the entities themselves.
\medskip

\begin{example}
 For a classical example, think of an human (the observer) in front of a drink-vending machine. The observer can make experiments, such as pressing the buttons, inserting coins etc. A pre-condition for being able to tell something (and eventually get a drink) is that the machine is running. That is, a current state of the machine is defined, and the machine has an underlying \emph{operational semantics}, which is what the machine really does, independently from what the observer sees. While the machine is running, the observer performs its experiments, and observes some side-effects. The machines reaches a new state. This is an example where the ``syntax of experiments'' (e.g. inserting a coin, or pressing a button) is not the ``syntax of the vending machine'' which would be describing its internal mechanics.
\end{example}

\paragraph{Related work.} The study of dialgebras in computer science was initiated in \cite{Hag87} for the categorical specification of data types, and further investigated for the same purpose in \cite{PZ01}. So far, they have not been explored in detail. In this work we divert from the earlier research line: we find applications of dialgebras to programming language semantics, and look at the behavioural equivalences they induce on processes.  Moreover, even though we do not provide examples in the current paper, we do not restrict our attention just to the polynomial functors as the syntax of experiments (therefore, we use the equivalences from kernels of morphisms instead of the relational lifting used in \cite{PZ01}). This is since we expect that more complex functors may have useful applications (see \S \ref{sec:conclusions}). 

\paragraph{Map of the paper.} 

In \S \ref{sec:algebras-and-coalgebras} we give the definitions of algebras and coalgebras, for comparison with dialgebras. In \S \ref{sec:dialgebras} we give the definition of a dialgebra and explain their intended use. In \S \ref{sec:asynchronous-CCS} we present the asynchronous semantics of the CCS. In \S \ref{sec:dialgebras-ccs} we give a dialgebraic semantics to the CCS that coincides with the asynchronous one. In \S \ref{sec:further-examples} we informally discuss other examples of dialgebras. Finally in \S \ref{sec:conclusions} we sketch some possible future directions.

\section{Algebras and Coalgebras}\label{sec:algebras-and-coalgebras}

Algebras and coalgebras provide an established methodology for the specification of programming language syntax and semantics.  We give here a brief introduction to the definitions of algebra and coalgebra in a category, tailored to a comparison between these two constructions and that of a dialgebra. For more details and pointers to the rich existing literature on algebras and coalgebras, see \cite{Rut00}.

First we give the preliminary notion of a \emph{kernel}. For the category-theoretical concepts that we mention, we refer the reader to some basic category theory book (see e.g. \cite{Awo10}).

\begin{definition}
 The kernel of $f : X \to Y$ in a category $\C$ is the pullback (if it exists) of the diagram $f,f$.
\end{definition}

When $\C = \Set$, the kernel of $f$ (up-to isomorphism) is the set $\ker f = \{(x_1,x_2) \in X \times X \mid f(x_1) = f(x_2)\}$, equipped with the two obvious projections; this is an equivalence relation on $X$.

\begin{definition}\name{algebra}
 Given a endofunctor $\F$ in a category $\C$, an $\F$-algebra is a pair $(X,f : \F X \to X)$. An \emph{homomorphism} between two $\F$-algebras $(X,f)$ and $(Y,g)$ is an arrow $h : X \to Y$ such that $h \circ f = g \circ \F h$, that is, the following diagram commutes:
 \begin{center}
  \begin{tikzpicture}
     \obj {FX} {} {\F X}
     \obj {FY} {right of = FX} {\F Y}
     \obj {X} {below of = FX} {X}
     \obj {Y} {below of = FY} {Y}
     
     \arr {f} {FX} {X}
     \arr {g} {FY} {Y}
     \arr {\F h} {FX} {FY}
     \arr {h} {X} {Y}
  \end{tikzpicture}
 \end{center}
\end{definition}

When $\F$ is a \emph{polynomial} functor,  and $\C$ is $\Set$, then the notion of $\F$-algebra coincides with the classical notion of algebra for a signature (to recover the full power of equational specifications, one needs the stronger notion of algebra of a \emph{monad}, which is out of the scope of this discussion).

\reminder{algebras specify \emph{operations} on the elements of a set.}

For example, one can specify the signature (not the equations) of a monoid by providing a  set $X$ and the interpretation of composition and identity. In other words, a monoid can be regarded as an algebra for the functor $\F X = 1 + X\times X$, that is, a set $X$ and a function $f : 1 + (X \times X) \to X$. The function $f$ is the co-pairing of $f_1 : 1 \to X$, which is the interpretation of the identity of the monoid, and $f_\times : X \times X \to X$, which interprets composition.

Of particular relevance for programming language semantics is that algebras specify the \emph{abstract syntax} of programming languages, by providing operations on abstract syntax terms that can be applied to build larger terms. The functor $\F$ provides a syntax to describe operations on elements, and an algebra $(X,f)$ gives the semantics of such a syntax, by computing elements out of these operations.

\begin{definition}\name{coalgebra}
 Given an endofunctor $\B$ in a category $\C$, a $\B$-coalgebra is a pair $(X,f : X \to \B X)$. An \emph{homomorphism} between two $\B$-coalgebras $(X,f)$ and $(Y,g)$ is an arrow $h : X \to Y$ such that $\B h \circ f = g \circ h$, that is, the following diagram commutes:
 \begin{center}
  \begin{tikzpicture}
     \obj {X} {} {X}
     \obj {Y} {right of = X} {Y}
     \obj {BX} {below of = FX} {\B X}
     \obj {BY} {below of = FY} {\B Y}
     
     \arr {f} {X} {BX}
     \arr {g} {Y} {BY}
     \arr {h} {X} {Y}
     \arr {\B h} {BX} {BY}
  \end{tikzpicture}
 \end{center}
\end{definition}

A coalgebra in the category $\Set$ of sets and functions is a function $f : X \to \B X$ for some behavioural endofunctor $B : X \to X$. The action of $\B$ on objects yields a set $\B X$ for each $X$, which is intended to be the \emph{transition type} or \emph{observation type} of the system.

When $\B X = \Pfin(L \times X)$ and $\C$ is $\Set$, so that $X$ is a set, then a $\B$-coalgebra $f$ coincides with the classical notion of labelled transition system (LTS) with labels in $L$. Here, $X$ is the set of states of the system, $L$ is the set of labels, and for all $x \in X$, $f(x)$ is a set of \emph{labelled transitions}, that is, pairs $(\ell,x')$ consisting of a label and a destination state. 

\reminder{coalgebras specify \emph{observations} on the elements of a set.}

For example, one can specify an interactive system by providing a set $X$ of \emph{states}, and a transition function $f : X \to \Pfin (L \times X)$ describing the non-deterministic observations that we can make about the execution of a process, such as an input, an output, or an internal computation. It is useful to think of $L$, in this specific case, as the type of \emph{side effects} of the program execution.

The crucial fact about coalgebras is that they form a category, and the natural equivalence relation obtained by the kernel of homomorphisms generalises bisimilarity of LTSs.

By changing the transition type $\B$, one gains generality w.r.t. LTSs. For instance, one can use the probability distribution functor $\mathcal{D}$ in combination with other functors to express various degrees of probabilistic systems \cite{Sok05}.

\section{Dialgebras}\label{sec:dialgebras}

Behavioural equivalences, such as bisimilarity, are typically not based on the syntax of processes. Rather, an external \emph{observer} is assumed, that can see their behaviour. 
Processes are equivalent when the external observer can not tell them apart.

In this section we introduce dialgebras. We will see that the natural equivalence relation induced by morphisms is still based on behaviours. However, the external observer is now endowed with the power to interact with the system, by doing \emph{experiments} and \emph{observing} the results.

\begin{definition}\name{dialgebra} Given a category $\C$, and two endofunctors\footnote{In \cite{Hag87}, $\F$ and $\B$ just are required to have the same codomain, not to be endofunctors. The simplified definition we adopt is sufficient for this paper.} $\F, \B : \C \to \C$, a $(F,B)$-\emph{dialgebra} is a pair $(X,f)$ where $X$ is an object and $f : \F X \to \B X$ is an arrow of $\C$.
\end{definition}

We will just refer to such a structure as a \emph{dialgebra} when $\F$ and $\B$ are clear from the context. In the remainder of this section, let us fix two endofunctors $\F$ and $\B$.

We call $\F$ the \emph{interaction} functor, as it is intended to provide a syntax for constructing experiments. The functor $\B$ is the \emph{observation} functor, which is the type of the observed results.

\begin{definition}\name{dialgebra homomorphism}
 Given two dialgebras $(X,f)$ and $(Y,g)$, a \emph{dialgebra homomorphism} from $(X,f)$ to $(Y,g)$ is an arrow $h : X \to Y$ such that $g \circ \F h = \B h \circ f$, that is, the following diagram commutes
 \begin{center}
  \begin{tikzpicture}
     \obj {FX} {} {\F X}
     \obj {FY} {right of = FX} {\F Y}
     \obj {BX} {below of = FX} {\B X}
     \obj {BY} {below of = FY} {\B Y}
     
     \arr {f} {FX} {BX}
     \arr {g} {FY} {BY}
     \arr {\F h} {FX} {FY}
     \arr {\B h} {BX} {BY}
  \end{tikzpicture}
 \end{center}
\end{definition}

$(\F,\B)$-dialgebras and their homomorphisms form a category. Clearly, when $\B = \Id$ (the identity functor) one recovers the category of $\F$-algebras, and when $\F = \Id$ one recovers the category of $\B$-coalgebras. In this work, we only focus on dialgebras in the category $\Set$ of sets and functions.

\smallskip
\begin{example}
 Non-deterministic Mealy machines are dialgebras for the functors $\F X = I\times X$ and $\B X= \Pfin(O \times X)$, for $I$ the set of input values and $O$ the set of output values.
\end{example}
\smallskip

A dialgebra allows one to specify a set of experiments $\F X$ that, when executed trough $f$, give rise to observations in $\B X$. 
For a comparison, we mention \emph{bialgebras}. A bialgebra \cite{Tur97} is a pair $(f,g)$ of an algebra $f : \F X \to X$ and a coalgebra $g : X \to \B X$ having the same underlying set $X$. The algebra is used to construct elements, the coalgebra to observe them. Every bialgebra is also a dialgebra (the composite $g \circ f : \F X \to \B X$). Whereas a bialgebra specifies a set equipped with two separate, although possibly nicely interacting, coalgebraic and algebraic operations, a \emph{dialgebra} specifies a set equipped with operations that behave algebraically and  coalgebraically at the same time. The interpretation of the ``algebraic operations'' (the experiments) of a dialgebra does not yield a result, but rather an observation on it. When using dialgebras, just like in algebras, the observer can formally specify a structure (the experiment) that will be executed; just like in coalgebras, the observer interacts with the system in a step-wise fashion: at each state, an experiment can be conducted, yielding observations and possibly subsequent states, on which further experiments are possible.

\reminder{dialgebras specify \emph{operations} on the elements of a set, that yield \emph{observations} as a result.} 

We now define the underlying equivalence of a dialgebra.

\begin{definition}\label{def:dialgebraic-bisimilarity-kernel}\name{dialgebraic bisimilarity}
 Given a dialgebra $(X,f)$, dialgebraic bisimilarity is the relation $\approx \subseteq X\times X$ induced by the kernel of any homomorphisms $h : (X,f) \to (Y,g)$ on the underlying set $X$. That is, we say that $x \approx y \iff \exists (Y,g) . \exists h : (X,f) \to (Y,g) . h(x) = h(y)$.
\end{definition}

In the rest of the paper, we are going to see how to use dialgebras to model asynchrony. An example characterisation of the equivalence induced by morphisms as a back-and-forth condition, as typical in bisimilarity of LTSs, is given in Definition \ref{def:bisimilarity-of-dialgebras} and Theorem \ref{thm:kernel-equivalence}.

\section{The asynchronous CCS}\label{sec:asynchronous-CCS}

\subsection{Syntax and operational semantics}

The \emph{calculus of communicating systems} (CCS) \cite{Mil82} is a simple language for studying interactive systems, featuring interleaved parallel composition and synchronization over named channels. In this paper, we use the asynchronous semantics. The definitions we adopt come from the ones for the $\pi$-calculus in \cite{ACS98}; we refer the reader to that work for an in-depth study of asynchrony in process calculi.

Let $C$ denote a countable set of \emph{channels}. Define $L_i = C$, $L_{o} = \{ \bar c | c \in C\}$, $L_\tau = \{\tau\}$, $L=L_i \cup L_o \cup L_\tau$, the set of \emph{input labels}, \emph{output labels}, \emph{internal labels}, and \emph{labels}, respectively. These labels are observations on a system, representing sending ($\bar c$) or receiving ($c$) an input signal on a channel $c$, or doing an internal computation step $\tau$.

\begin{definition}\name{CCS syntax}
The syntax of the asynchronous CCS is defined by the following grammar, where $c$ ranges over a countable set $C$ of \emph{channel names}. 
 $$P ::= \emptyset \mid \tau.P \mid c . P \mid \bar c \mid P \parallel P  \mid P+Q$$
\end{definition}

We omit the replication and restriction constructs. This is done for ease of explanation as adding them does not affect our proofs. From now on, let $X$ denote the set of agents. In the syntax, $\emptyset$ represents the empty process, that does nothing; $\tau.P$ performs an internal computation step and then behaves as $P$; $c.P$ waits for an input signal on channel $c$, and then behaves as $P$; $\bar c$ sends an output signal on channel $c$; $P_1 \parallel P_2$ is the parallel composition of $P_1$ and $P_2$; $P+Q$ denotes non-deterministic choice. 

\begin{definition}\label{def:CCS-operational-semantics}\name{CCS operational semantics}
The operational semantics is given in the form of a LTS $t : X \to \Pfin(L \times X)$, defined by the following rules:
$$c . P \tr c P\,(in) \qquad \tau . P \tr \tau P \,(tau) \qquad \bar c \tr{\bar c} \emptyset \,(out)$$
$$\frac{P \tr \alpha P'}{P \parallel Q \tr \alpha P' \parallel Q} \,(par) \qquad  \frac{Q \tr \alpha Q'}{P \parallel Q \tr \alpha P \parallel Q'} \,(par') \qquad \frac{P\tr c P'\quad Q\tr {\bar c} Q'}{ P \parallel Q \tr \tau P' \parallel Q'} \,(syn)$$
$$\frac{P \tr \alpha P'}{P + Q \tr \alpha P'} \, (sum) \qquad \frac{Q \tr \alpha Q'}{P + Q \tr \alpha Q'} \, (sum')$$
\end{definition}

\medskip

\noindent Rules $(in)$, $(tau)$, and $(out)$ are straightforward. Rules $(par)$ and $(par')$ allow components to run in parallel in an interleaved fashion. Rule $(syn)$ allows a process that can do an input and a process that can do an output to synchronise. Rules $(sum)$ and $(sum')$ allow a non-deterministic choice to take place.

\subsection{Asynchronous bisimilarity}

We define asynchronous bisimulation and bisimilarity directly for CCS terms.

\medskip

\begin{definition}\label{def:asynchronous-bisimilarity}\name{CCS asynchronous bisimilarity}
 A relation $R \subseteq X \times X$ is an \emph{asynchronous simulation} if and only if, whenever $(x,y) \in R$, and $x \tr \alpha x'$, then there is $y'$ such that:
\begin{itemize}
 \item if $\alpha = \tau$ or $\alpha = \bar c$ for some $c$, then $y \tr \alpha y'$ and $(x',y') \in R$;

 \item if $\alpha = c$ for some $c$, then $\bar c \parallel y \tr \tau y'$ and $(x',y') \in R$ \\ or, equivalently\\ if $\alpha = c$ for some $c$, then $(x',y') \in R$ and either $y \tr c y'$ or $y \tr \tau y''$ with $y' = \bar c \parallel y''$.
\end{itemize}
An \emph{asynchronous bisimulation} is a simulation $R$ such that $R^{-1}$ is a simulation. \emph{Asynchronous bisimilarity} is the largest bisimulation.
\end{definition}

\medskip

We write $x \sim y$ whenever $x$ is asynchronous bisimilar to $y$, or equivalently there is some asynchronous bisimulation $R$ such that $(x,y) \in R$. In asynchronous bisimilarity, input labels can be matched ``loosely'' by a $\tau$ transition that stores an output process in parallel with the execution. We are going to see how to turn this definition into dialgebraic bisimilarity. Before that, we remark that synchronous bisimilarity (that would be obtained by employing strong bisimilarity on the LTS from Definition \ref{def:CCS-operational-semantics}) is included in the asynchronous one. The inclusion is strict. Two processes that are not synchronous bisimilar but are asynchronous bisimilar are $c . \bar c . \emptyset + \tau . \emptyset$ and $\tau . \emptyset$ (example adapted from \cite{ACS98}, where a thorough discussion can be found).

\section{Observing interactions}\label{sec:dialgebras-ccs}

Asynchronous bisimilarity does not coincide with the coalgebraic bisimilarity obtained from the transition system of Definition \ref{def:CCS-operational-semantics}. We define a dialgebra whose set of states is that of the CCS agents, and where dialgebraic bisimilarity is asynchronous bisimilarity. 

\subsection{Dialgebra for the asynchronous CCS}

First, we define, and fix hereafter, a specific pair of interaction and observation functors. 

\begin{definition}\name{CCS interaction and observation functors}\label{def:ccs-interaction-and-observation-functors}
 We let the interaction functor be $\F X = X + L_o \times X$, and the observation functor be $\B X = \Pfin((L_o\cup L_\tau) \times X)$.
\end{definition}

For any set $X$, an element $e$ of the disjoint union $\F X$ is either in the form $x$ or $(\bar c,x)$, for $c \in C$ and $x \in X$. Roughly, $e$ is the syntax of an experiment where we can either observe the execution of $x$, or send a signal to $x$ on channel $c$. An element $t$ of $\B X$ is a set of pairs $(\bar c,x')$ or $(\tau,x')$ for $c \in C$ and $x' \in X$. The element $t$ is a transition to $x'$ labelled with either the observation of an output signal on a certain channel, or of an internal computation step.  No input labels appear. Input is modelled as the argument of a function, instead of as a side-effect. This is in line with the idea that input is an action of the environment, not an action of the process.

We now define a $(\F,\B)$-dialgebra for the CCS. From now on, whenever $f$ is a dialgebra, we use the shorthand $e \tr \beta _f x'$ to denote that $(\beta,x') \in f(e)$, and omit $f$ when clear from the context.

\begin{definition}\label{def:CCS-dialgebraic-semantics}\name{CCS dialgebraic semantics}
 The $(\F,\B)$-dialgebra $f : \F X \to \B X$, where $X$ is the set of CCS processes equipped with the operational semantics of Definition \ref{def:CCS-operational-semantics}, is defined by the following rules:
 $$\frac{x \tr \alpha x'\quad \alpha = \tau \lor \alpha = \bar c}{x \tr \alpha _f  x'}\,(run) \qquad \frac{x \tr c x'}{(\bar c,x) \tr \tau _f x'}\,(in) \qquad \frac{x \tr \tau x'}{(\bar c,x)\tr \tau _f \bar c \parallel x'}\,(store)$$
\end{definition}

Premises of rules use the operational semantics of Definition \ref{def:CCS-operational-semantics}. Rule $(run)$ expresses the fact that we can observe the output and internal computation steps of a system. Rule $(in)$ states that whenever a process $x$ can do input, the experiment $(\bar c,x)$ yields the observation of an internal computation step. By Rule $(store)$, whenever a process can do an internal computation step, then it can also store an input signal from the environment for subsequent processing. The observations for the $(in)$ and $(store)$ rules are the same, therefore an observer can not distinguish the application of either one of the two rules. 

\subsection{Characterising dialgebraic bisimilarity}

A characterization of the equivalence induced by dialgebra homomorphisms for the functors $\F$ and $\B$ of Definition \ref{def:ccs-interaction-and-observation-functors} can be given as follows. 

\begin{definition}\label{def:bisimilarity-of-dialgebras}\name{Back-and-forth bisimilarity of dialgebras}
 Given a $(\F,\B)$-dialgebra $f : \F X \to \B X$, a relation $R \subseteq X \times X$ is a \emph{back-and-forth simulation} if and only if, for all $(x,y) \in R$ and $c \in C$:
  \begin{enumerate} 
    \item \label{def:bisimilarity-of-dialgebras:out-tau} whenever $x \tr \alpha _f x'$, there is $y'$ such that $y \tr \alpha _f y'$ and $(x',y') \in R$;
    \item \label{def:bisimilarity-of-dialgebras:in} whenever $(\bar c,x) \tr \tau _f x'$, there is $y'$ such that $(\bar c,y) \tr \tau _f y'$ and $(x',y') \in R$.
  \end{enumerate}
 A bisimulation is a simulation $R$ such that $R^{-1}$ is a simulation. Two elements of $X$ are said \emph{bisimilar} if and only if there is a bisimulation relating them. The corresponding relation is called bisimilarity.
\end{definition}

We write $x \simeq y$ to denote that $x$ is bisimilar to $y$.
 
\begin{proposition}\label{prop:bisimilarity-of-dialgebras-transitive}
 Back-and-forth bisimilarity is an equivalence relation.
\end{proposition}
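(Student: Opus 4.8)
The plan is to establish the three defining properties of an equivalence relation---reflexivity, symmetry, and transitivity---for $\simeq$, and the one genuinely reusable ingredient I would isolate first is a \emph{composition lemma}: the relational composition of two back-and-forth simulations is again a back-and-forth simulation. With that lemma in hand all three properties fall out almost formally, since bisimulations are by definition closed under inverse and (via the lemma) under composition, while the diagonal is trivially a bisimulation.

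For reflexivity, I would check that the diagonal $\Delta = \{(x,x) \mid x \in X\}$ is a bisimulation. Given $(x,x) \in \Delta$ and a move $x \tr{\alpha}_f x'$, the same move witnesses clause~\ref{def:bisimilarity-of-dialgebras:out-tau} with $(x',x') \in \Delta$; clause~\ref{def:bisimilarity-of-dialgebras:in} is identical using $(\bar c, x) \tr{\tau}_f x'$. Since $\Delta^{-1} = \Delta$, it is a bisimulation, so $x \simeq x$ for every $x$. For symmetry, suppose $x \simeq y$ via a bisimulation $R$. By definition $R^{-1}$ is also a simulation, and $(R^{-1})^{-1} = R$ is a simulation, so $R^{-1}$ is itself a bisimulation; since $(y,x) \in R^{-1}$ we get $y \simeq x$.

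The substance is transitivity. Assume $x \simeq y$ and $y \simeq z$, witnessed by bisimulations $R$ and $S$ with $(x,y) \in R$ and $(y,z) \in S$. I would prove that $S \circ R$---defined by $(x,z) \in S \circ R$ iff there is $y$ with $(x,y) \in R$ and $(y,z) \in S$---is a bisimulation. For the simulation property, take $(x,z) \in S \circ R$ with intermediate $y$. A move $x \tr{\alpha}_f x'$ is matched by $R$ to some $y \tr{\alpha}_f y'$ with $(x',y') \in R$, and that $y$-move is in turn matched by $S$ to some $z \tr{\alpha}_f z'$ with $(y',z') \in S$; hence $z \tr{\alpha}_f z'$ with $(x',z') \in S \circ R$, establishing clause~\ref{def:bisimilarity-of-dialgebras:out-tau}. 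Clause~\ref{def:bisimilarity-of-dialgebras:in} is verified by the very same chaining applied to the input moves $(\bar c, x) \tr{\tau}_f x'$. To see that $(S \circ R)^{-1}$ is a simulation, observe that $(S \circ R)^{-1} = R^{-1} \circ S^{-1}$, and since $R$ and $S$ are bisimulations, $R^{-1}$ and $S^{-1}$ are simulations, so their composition is a simulation by the same chaining argument. Thus $S \circ R$ is a bisimulation relating $x$ and $z$, giving $x \simeq z$.

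I do not expect a serious obstacle here: the argument is the standard proof that bisimilarity is an equivalence, specialised to the two-clause back-and-forth definition. The only points requiring genuine attention are bookkeeping ones---correctly threading the existential witnesses through the composition so that both the label $\alpha$ and the reached states line up, and checking that clause~\ref{def:bisimilarity-of-dialgebras:in} (the input experiments $(\bar c, \cdot)$) is closed under composition in exactly the same way as clause~\ref{def:bisimilarity-of-dialgebras:out-tau}, rather than silently assuming it.
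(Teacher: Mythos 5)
Your proof is correct: the paper states this proposition without any proof (treating it as routine), and your argument---diagonal, inverse, and relational composition of bisimulations, with both clauses of Definition~\ref{def:bisimilarity-of-dialgebras} checked under composition---is exactly the standard argument that fills that gap. Nothing further is needed.
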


\begin{theorem}\label{thm:kernel-equivalence}\name{back-and-forth vs. kernel}
When $\F$ and $\B$ are as in Definition \ref{def:ccs-interaction-and-observation-functors}, dialgebraic bisimilarity from Definition \ref{def:dialgebraic-bisimilarity-kernel} and back-and-forth bisimilarity from Definition \ref{def:bisimilarity-of-dialgebras} coincide.
\end{theorem}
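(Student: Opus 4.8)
The plan is to prove the two relations coincide by showing mutual inclusion, using back-and-forth bisimilarity as an operational characterisation and dialgebraic bisimilarity (Definition~\ref{def:dialgebraic-bisimilarity-kernel}) as the categorical notion. The cleanest route is to establish two facts: first, that back-and-forth bisimilarity $\simeq$ is itself a dialgebraic bisimilarity (i.e.\ it is the kernel of \emph{some} homomorphism), which gives $\simeq\ \subseteq\ \approx$; and second, that the kernel of any homomorphism is a back-and-forth bisimulation, which gives $\approx\ \subseteq\ \simeq$. The second inclusion is the easier ``soundness'' direction, and I would do it first.

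For $\approx\ \subseteq\ \simeq$, let $h : (X,f) \to (Y,g)$ be any dialgebra homomorphism and suppose $h(x)=h(x')$; I must check that $\ker h$ satisfies the two back-and-forth conditions of Definition~\ref{def:bisimilarity-of-dialgebras}. The key is to unfold the commuting square $g \circ \F h = \B h \circ f$ pointwise. Applied to the summand $x \in X \subseteq \F X$, commutativity says $g(h(x)) = \B h (f(x))$, and since $\B X = \Pfin((L_o\cup L_\tau)\times X)$, the action $\B h$ relabels transitions by applying $h$ to targets while keeping labels fixed. So a transition $x \tr{\alpha}_f x'$ forces a transition $h(x) \tr{\alpha}_g z$ with $z = h(x')$; because $h(x)=h(y)$ the same transition is available from $h(y)$, and pulling it back through $f$ on the $y$-side yields a matching $y \tr{\alpha}_f y'$ with $h(y')=h(x')$, i.e.\ $(x',y')\in\ker h$. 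The input summand $(\bar c,x)\in L_o\times X\subseteq \F X$ is handled identically, using that $\F h$ acts as $h$ on the second component while fixing $\bar c$; this discharges condition~\ref{def:bisimilarity-of-dialgebras:in}. Symmetry of $\ker h$ gives the back direction, so $\ker h$ is a back-and-forth bisimulation and hence $\ker h\ \subseteq\ \simeq$; quantifying over all $h$ yields $\approx\ \subseteq\ \simeq$.

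For the converse $\simeq\ \subseteq\ \approx$, I would build an explicit quotient dialgebra witnessing that $\simeq$ is a kernel. By Proposition~\ref{prop:bisimilarity-of-dialgebras-transitive} the relation $\simeq$ is an equivalence, so I can form the quotient set $Y = X/\!\simeq$ with projection $q : X \to Y$. The task is to define $g : \F Y \to \B Y$ making $q$ a homomorphism, i.e.\ $g \circ \F q = \B q \circ f$, and then $\ker q$ is exactly $\simeq$, giving $\simeq\ =\ \ker q\ \subseteq\ \approx$. The content is \emph{well-definedness} of $g$: on a representative $[x]$ I set $g([x]) = \B q(f(x))$, and on $(\bar c,[x])$ I set $g(\bar c,[x]) = \B q(f(\bar c,x))$; I must check these are independent of the chosen representative. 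This is precisely where the two back-and-forth clauses do their work: if $x \simeq x'$, the defining conditions guarantee that $f(x)$ and $f(x')$ have the same $\simeq$-classes of targets under each label, so $\B q(f(x)) = \B q(f(x'))$, and likewise for the input summand.

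The main obstacle, and the step deserving the most care, is this well-definedness check. It requires verifying that back-and-forth bisimilarity is a \emph{congruence} with respect to the target states appearing in $f$'s output sets, in both the ``run/output'' component and the ``input'' component, and that the two clauses of Definition~\ref{def:bisimilarity-of-dialgebras} are exactly strong enough to collapse the $\Pfin$-sets to the same quotient. One subtlety is that $\Pfin$ is only a \emph{weak} pullback preserver, so I would avoid appealing to general coalgebraic kernel machinery and instead compute directly with finite sets of labelled pairs, using that two such sets have equal $q$-images precisely when every element of one is $\simeq$-matched by an element of the other with the same label---which is exactly the back-and-forth condition. Once $g$ is well defined, functoriality makes the homomorphism square commute by construction, and the identification $\ker q = \simeq$ is immediate from the definition of the quotient.
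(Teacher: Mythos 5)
Your proposal is correct and takes essentially the same route as the paper's own proof: one direction shows that the kernel of any homomorphism is a back-and-forth bisimulation by unfolding the homomorphism square pointwise on both summands of $\F X$, and the converse constructs the quotient dialgebra on $X\quotient{\simeq}$ and observes that the projection is a homomorphism with kernel $\simeq$. The only difference is presentational: you spell out the well-definedness of the quotient structure in detail, which the paper dispatches in one line (``well defined by definition of $\simeq$'').
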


\begin{proof}
 Fix a dialgebra $(X,f)$. First, consider a dialgebra $(Y,g)$ and $h : (X,f) \to (Y,g)$. We show that $\ker h$ is a back-and-forth bisimulation, therefore it is included in $\simeq$. Assume $h x = h y$ for some $x,y \in X$. For all $\alpha \in L$, by definition of homomorphism, we have 
 $g(\F h(\alpha,x)) = \B h (f (\alpha,x))$. Therefore $g(\alpha, h y) = \B h (f (\alpha,x))$. Let $(\beta,x') \in f(\alpha,x)$. Then $(\beta,h x') \in \B h (f(\alpha,x))$, therefore $(\beta,h x') \in g(\alpha, h y) = g(\F h (\alpha,y))$, thus by commutativity $(\beta,h x') \in \B h (f(\alpha,y))$. Then there is some $y'$ such that $(\beta,y') \in f(\alpha,y)$ and $h x' = h y'$. This proves that $\ker h$ is a simulation. Notice that the kernel of a function is an equivalence relation, therefore $(\ker h)^{-1}=\ker h)$, thus proving that $\ker h$ is a bisimulation.
 For the other direction of the proof, let $[x]$ denote the equivalence class of $x$ in $X\quotient{\simeq}$. Consider the quotient dialgebra $(X\quotient{\simeq},f\quotient{\simeq})$, with $f\quotient{\simeq}(\alpha,[x]) = \{(\beta,[x']) | (\beta,x') \in f(x) \}$. Notice that $f\quotient\simeq$ is well defined by definition of $\simeq$. The quotient function $h x=[x]$ is obviously a homomorphism of dialgebras, and it is the case that whenever $x \simeq y$ then $h(x) = h(y)$.
\end{proof}

Finally, we prove that asynchronous and back-and-forth bisimilarity coincide.

\begin{theorem}\name{asynchronous vs. back-and-forth}
 Asynchronous bisimilarity from Definition \ref{def:asynchronous-bisimilarity} and back-and-forth bisimilarity coincide for the set $X$ of CCS agents, that is: for all $x,y \in X$, we have $x \sim y$ if and only if $x \simeq y$. Therefore, by Theorem \ref{thm:kernel-equivalence}, asynchronous bisimilarity and dialgebraic bisimilarity coincide.
\end{theorem}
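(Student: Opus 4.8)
\section*{Proof proposal}

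The plan is to prove the two inclusions separately: that every back-and-forth bisimilar pair is asynchronous bisimilar ($\mathord{\simeq}\subseteq\mathord{\sim}$) and conversely ($\mathord{\sim}\subseteq\mathord{\simeq}$), in each case by exhibiting one relation as a (bi)simulation of the other kind. Since both $\sim$ and $\simeq$ are largest bisimulations, they are symmetric, so it suffices to verify the \emph{simulation} clauses; the backward clauses follow because $R^{-1}=R$. The first step is pure book-keeping: I would unfold the dialgebraic transition relation $\tr{}_f$ into the underlying LTS using the three rules of Definition \ref{def:CCS-dialgebraic-semantics}. For an experiment of the form $e=x$ only $(run)$ applies, so $x\tr\alpha_f x'$ (with $\alpha\in L_o\cup L_\tau$) holds exactly when $x\tr\alpha x'$ in the LTS with $\alpha=\tau$ or $\alpha=\bar c$. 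For an experiment of the form $e=(\bar c,x)$ only $(in)$ and $(store)$ apply, so $(\bar c,x)\tr\tau_f x'$ holds exactly when either $x\tr c x'$, or $x\tr\tau x''$ with $x'=\bar c\parallel x''$. Thus clause \ref{def:bisimilarity-of-dialgebras:out-tau} of Definition \ref{def:bisimilarity-of-dialgebras} matches output and internal steps strictly, while clause \ref{def:bisimilarity-of-dialgebras:in} fuses genuine inputs with $\tau$-steps that have stored a pending output.

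For $\mathord{\simeq}\subseteq\mathord{\sim}$ I would show that $\simeq$ is an asynchronous simulation. Take $(x,y)\in\mathord{\simeq}$ and $x\tr\alpha x'$. If $\alpha=\tau$ or $\alpha=\bar c$, then $x\tr\alpha_f x'$ by $(run)$, clause \ref{def:bisimilarity-of-dialgebras:out-tau} gives $y\tr\alpha_f y'$ with $x'\simeq y'$, and unfolding returns $y\tr\alpha y'$, matching the first bullet of Definition \ref{def:asynchronous-bisimilarity}. If $\alpha=c$, then $(\bar c,x)\tr\tau_f x'$ by $(in)$, clause \ref{def:bisimilarity-of-dialgebras:in} gives $(\bar c,y)\tr\tau_f y'$ with $x'\simeq y'$, and unfolding says precisely that either $y\tr c y'$ or $y\tr\tau y''$ with $y'=\bar c\parallel y''$ — which is the second bullet of Definition \ref{def:asynchronous-bisimilarity} in its ``equivalently'' form. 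This direction needs no extra ingredients.

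For $\mathord{\sim}\subseteq\mathord{\simeq}$ I would show that $\sim$ is a back-and-forth simulation, reading the same correspondence in reverse. Clause \ref{def:bisimilarity-of-dialgebras:out-tau} and the $(in)$ sub-case of clause \ref{def:bisimilarity-of-dialgebras:in} transfer directly from the two bullets of asynchronous simulation. The delicate point — and the main obstacle — is the $(store)$ sub-case of clause \ref{def:bisimilarity-of-dialgebras:in}: from $(\bar c,x)\tr\tau_f x'$ with $x\tr\tau x''$ and $x'=\bar c\parallel x''$, the first bullet yields $y\tr\tau y''$ with $x''\sim y''$, and $(store)$ gives $(\bar c,y)\tr\tau_f \bar c\parallel y''$; to close the clause I must know that $\bar c\parallel x''\sim\bar c\parallel y''$. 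This is exactly the statement that asynchronous bisimilarity is preserved by the context $\bar c\parallel{-}$. I would obtain it from the standard fact that $\sim$ is a congruence for parallel composition, proving (or citing from \cite{ACS98}) that $S=\{(x_1\parallel x_2,\,y_1\parallel y_2)\mid x_1\sim y_1,\ x_2\sim y_2\}$ is an asynchronous bisimulation and instantiating $x_1=y_1=\bar c$. The work in that lemma is the genuinely technical part: matching a $(par)$-induced input of $x_1\parallel x_2$ requires the \emph{loose} form $\bar c\parallel(y_1\parallel y_2)\tr\tau(\cdots)$ of the asynchronous clause, and one must reconcile the parallel structure up to associativity and $\emptyset\parallel P\equiv P$, so the argument is cleanest when carried out up to structural congruence. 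Granting this congruence, the $(store)$ case closes, both inclusions hold, and — via Theorem \ref{thm:kernel-equivalence} — asynchronous, back-and-forth, and dialgebraic bisimilarity all coincide.
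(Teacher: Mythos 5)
Your overall architecture coincides with the paper's: both inclusions are proved by showing that each bisimilarity is a simulation of the other kind (symmetry disposing of the backward clauses), your unfolding of $\tr{}_f$ into the underlying LTS is exactly the case analysis the paper performs implicitly, your $\mathord{\simeq}\subseteq\mathord{\sim}$ direction is the paper's second half verbatim, and you correctly isolate the $(store)$ sub-case as the one point where an extra fact about $\sim$ is needed, namely $x''\sim y''\implies \bar c\parallel x''\sim\bar c\parallel y''$, which the paper invokes as ``well known and easy to prove''.

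The gap is in how you propose to obtain that fact. The lemma you want to prove or cite --- that $S=\{(x_1\parallel x_2,\,y_1\parallel y_2)\mid x_1\sim y_1,\ x_2\sim y_2\}$ is an asynchronous bisimulation, i.e.\ that $\sim$ is a congruence for $\parallel$ --- is \emph{false} in this calculus. Take $x_1=c.\bar c+\tau.\emptyset$ and $y_1=\tau.\emptyset$ (the paper's own example of asynchronously bisimilar processes) and $x_2=y_2=\bar c+\bar d$. Rule $(syn)$ gives $x_1\parallel x_2\tr{\tau}\bar c\parallel\emptyset$; since $\tau$-moves must be matched \emph{exactly} by the first bullet of Definition \ref{def:asynchronous-bisimilarity}, $y_1\parallel y_2$ must answer with its only $\tau$-move, $y_1\parallel y_2\tr{\tau}\emptyset\parallel(\bar c+\bar d)$; but $\bar c\parallel\emptyset\not\sim\emptyset\parallel(\bar c+\bar d)$, because the right-hand side has a $\bar d$-transition and outputs must also be matched exactly. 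Hence $x_1\parallel x_2\not\sim y_1\parallel y_2$, so $S\not\subseteq\mathord{\sim}$ and no sound technique can establish that $S$ is a bisimulation. (Citing \cite{ACS98} does not rescue this: their congruence results are for the asynchronous $\pi$-calculus, which has no unguarded sum, and the failure above is caused precisely by the output choice $\bar c+\bar d$ that this paper's grammar allows. Note also that the fatal case is not the $(par)$-induced input you flag as the technical crux, but a $(syn)$-step answered by a store.) The repair is to prove only the instance actually needed, with the fixed atomic particle $\bar c$ as context: the relation $\{(\bar c\parallel x,\,\bar c\parallel y)\mid x\sim y\}\cup\mathord{\sim}$, closed up to $\sim$ (using that $\sim$ is an equivalence and that strong bisimilarity, hence the structural laws for $\parallel$ and $\emptyset$, is contained in $\sim$), \emph{is} an asynchronous bisimulation. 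The verification succeeds because $\bar c$ offers no input and no choice: in the delicate case, $\bar c\parallel x\tr{\tau}\emptyset\parallel x'$ by $(syn)$ with $x\tr{c}x'$, matched on the other side by a store $y\tr{\tau}y''$ with $x'\sim\bar c\parallel y''$, is answered by $\bar c\parallel y\tr{\tau}\bar c\parallel y''$, and $\emptyset\parallel x'\sim x'\sim\bar c\parallel y''$ closes the diagram --- there is no pending alternative output left unresolved, which is exactly what breaks the general congruence. With this weaker, true lemma in place of the false one, the rest of your argument is correct and agrees with the paper's proof.
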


\begin{proof}
 We provide the proof just for completeness, as it is immediate from the characterisation of asynchronous bisimilarity as a $1$-bisimilarity in \cite{ACS98}. We prove that $\sim$ is a back-and-forth bisimulation. Symmetry, and Case \ref{def:bisimilarity-of-dialgebras:out-tau} from Definition \ref{def:bisimilarity-of-dialgebras} are obvious. For Case \ref{def:bisimilarity-of-dialgebras:in}, suppose $(\bar c,x) \tr{\tau} x'$. Then we distinguish two cases.
 \begin{itemize} 
  \item if Rule $(in)$ is applied to $(\bar c,x)$, we have $x \tr c x'$. We now look at Definition \ref{def:asynchronous-bisimilarity}. Since $x\sim y$, we have $\bar c \parallel y \tr{\tau} y'$ with $x' \sim y'$. We inspect the rules in Definition \ref{def:CCS-operational-semantics}. The rules that can be applied to $\bar c \parallel y$ are $(par)$ and $(syn)$ (and $(par')$ which is treated in the same way as $(par)$). 
  Therefore we have either $y \tr \tau y''$ with $y' = \bar c \parallel y''$, or $y \tr c y'$. By applying either Rule $(in)$ or $(store)$ from Definition \ref{def:CCS-dialgebraic-semantics}, we obtain $(\bar c, y) \tr{\tau} _f y'$ and since $x' \sim y'$ we get the thesis.

  \item if Rule $(store)$ is applied to $(\bar c,x)$, then $x \tr \tau x''$ with $x' = \bar c \parallel x''$. Therefore, $y \tr \tau y''$ and $x'' \sim y''$. It is well known and easy to prove that $x'' \sim y'' \implies \bar c \parallel x'' \sim \bar c \parallel y''$. Therefore by applying Rule $(store)$ we get $(\bar c,y) \tr \tau y'$ and $x'\sim y'$, q.e.d.
 \end{itemize}
 Next, we prove that $\simeq$ is an asynchronous bisimulation. Suppose $x \simeq y$ and $x \tr \alpha x'$. We look at Definition \ref{def:asynchronous-bisimilarity}. The cases for $\alpha = \tau$ or $\alpha = \bar c$ are obvious. Suppose $\alpha = c$ for some $c$. By Rule $(in)$ in Definition \ref{def:CCS-dialgebraic-semantics} we have $(\bar c,x) \tr \tau _f x'$ and by $x \simeq y$ we get $(\bar c,y) \tr \tau _f y'$ with $x' \simeq y'$. Either Rule $(in)$ or $(store)$ from Definition \ref{def:CCS-dialgebraic-semantics} can be applied to $(\bar c,y)$. Therefore either $y \tr c y'$, or $y \tr \tau y''$ with $y' = \bar c \parallel y''$. In both cases, we have $\bar c \parallel y \tr \tau y'$ and $x' \simeq y'$, from which the thesis.
\end{proof}

\section{Discussion on further examples}\label{sec:further-examples}

The example that we present is very simple, and purposed to illustrate just the idea of an observer that can interact with the examined system. More interesting dialgebras can be described by either moving to a richer category than $\Set$, or by changing the interaction and observation functor. We briefly describe some possible constructions, whose detailed study is left for future work.

\paragraph{Complex systems} Consider dialgebras of the form $f : \Pfin(X) \to L \times \Pfin(X)$. At each step in time, from a set $p \in \Pfin(X)$, a side effect in $L$ is observed, and a new set of elements $p'$ is obtained. Such a function may be used to represent systems where the semantics depends on a number of entities that collaborate. At each step in time, the system evolves, some old elements may be ``destroyed'' and new elements can be created, while some side effect in $L$ takes place. The behaviour of the system is \emph{more than the sum of its parts}, in the sense that it is not determined by the behaviour of singletons. The semantics of $\{x\}$, that is, $x$ in isolation, may be totally unrelated to the semantics of, say, the set $\{x,y\}$. Notice that $f : \Pfin(X) \to L \times \Pfin(X)$ is also a coalgebra in $\Set$ for the functor $\mF{T}(X) = L \times X$, having $\Pfin(X)$ as underlying set. However, it's obvious that the obtained notion of bisimulation is not the same, even by just looking at types. Seeing $f$ as a coalgebra, one gets a relation on $\Pfin(X)$; seeing it as a dialgebra, one gets a relation on $X$, that takes into account how elements behave when joined to the same sets of other elements.

\paragraph{Chemical reactions} In many cases programming language semantics has been inspired by chemical and biological processes. Consider the finite multi-set functor 
$\mathcal{M}(X)=\{m : X \to \mathbb{N} \mid \{x \mid m(x) \neq 0\} \text{ is finite}\}$. Think of $X$ as a set of elements that take part in \emph{reactions} in variable quantities. A dialgebra $ f : \mathcal{M}(X) \to \mathcal{M}(X)$ specifies how a given reaction evolves by creating a multi-set of products from a multi-set of reagents. The obtained notion of bisimilarity makes reagents equivalent when substituting one with the other in any reaction yields equivalent products, in the same quantities.

\paragraph{The $\pi$-calculus} A very similar development to the one presented here, exemplifying the use of a different base category, is the semantics of the asynchronous $\pi$-calculus. 
Similarly to what happens for the synchronous pi-calculus and coalgebras \cite{ft99}, one would use the functor category $\Set^\I$, where $\I$ is the category of  of finite sets and injections. The semantics would involve the endofunctor for fresh name allocation $\delta$ which is typical of functor categories, which is needed to properly model bound output. Dialgebras using $\delta$ correspond to Mealy machines with name allocation along output, whose study is possibly of interest independently from the specific application of the $\pi$-calculus.

\paragraph{Testing semantics}  Even though we spoke of interaction and observation, we did not mention so far the family of  \emph{testing equivalences} (see \cite{DH84}), where interaction and observation play a key role. Testing equivalences are defined as those obtained by putting a process in parallel with an arbitrary other process making use of a distinguished channel. Output on such channel signals that a test has been successful. Binary dialgebras come to mind as an effective way to represent such kinds of equivalence relations. However, in testing  equivalences, one is not able to observe how many synchronisation steps between processes are needed before the success signal is sent. Such a semantics could be defined by observing the behaviour of a process as a single ``big step''; however, this would defeat the implicit coinductive properties of dialgebras. A common feature of dialgebras and coalgebras is that observations lead to successor states, and then in a coinductive fashion further experiments/observations can be done on these successor states. However, in the case of testing equivalences, there is no successor state: once success is signalled, the experiment is concluded. Further investigation may yield non-obvious coinductive ways to represent these kind of relations on processes.

\section{Conclusions and future work}\label{sec:conclusions}

The construction we have seen in \S \ref{sec:dialgebras-ccs} has obvious similarities with barbed equivalence and with the asynchronous semantics of the $\pi$-calculus by Honda and Tokoro (both described in \cite{ACS98}). That's expectable, since in the end we are trying to describe the same equivalence relation. 

In the case of the asynchronous CCS, it is not difficult to recover a coalgebraic semantics. This is done by translating the dialgebraic semantics along the isomorphisms $X + L_o \times X \to \Pfin((L_o + L_\tau) \times X) \cong (L_i + 1) \times X \to \Pfin((L_o + L_\tau) \times X) \cong X \to (\Pfin((L_o + L_\tau) \times X))^{L_i + 1}$ (indeed, after noting that $L_i \cong L_o$). Notice that the latter is genuinely a coalgebra for the functor $(\Pfin(L_o + L_\tau) \times -)^{L_i + 1}$. It is not difficult to see that such a translation preserves and reflects the equivalence induced by kernels of homomorphisms (of dialgebras in one case, of coalgebras in the other). 

Even though it might be interesting to derive a coalgebraic semantics for the asynchronous CCS, we do not discuss the details of such a construction: the purpose of using this language as an example is not to provide a new semantics for asynchronous process calculi. Rather, the asynchronous CCS is possibly the simplest language where it makes sense to distinguish between moves of the environment and moves of the system being examined in order to define the semantics. Our aim is to show how such a distinction is naturally encoded using dialgebras, and their built-in definition of behavioural equivalence makes them appealing as an alternative to coalgebras in the specification of interactive systems.

We summarise below some possible future directions and open questions.

\paragraph{Inductively defined dialgebras.}
We defined a dialgebra for the asynchronous CCS by assuming an existing operational semantics. It is indeed possible to specify such a semantics using dialgebras. First, because coalgebras actually \emph{are} dialgebras with $\F = \Id$. Moreover, one could easily define an $(\F,\B)$-dialgebra, for $\F$ and $\B$ as in \S \ref{sec:dialgebras-ccs}, directly by induction on terms forming the set of agents $X$, in the same fashion of bialgebras and distributive laws. It would be relevant to study distributive laws and specification languages for inductively defined dialgebras, following the same route of bialgebras. Doing so, it would be possible to guarantee that a given dialgebraic semantics of a calculus is also a congruence with respect to the operators of the algebra describing its syntax.

\paragraph{Logics}
Dialgebras are equipped in \cite{PZ01} with \emph{dialgebraic specifications}, even though neither a full adequacy result relating logical equivalence and bisimilarity, nor Birkoff-style theorems are established. It ought to be clarified what is a logical formalism that adequately specifies dialgebras. Such a logic would be an intermediate language between modal and equational logic. The work \cite{Pal2002}, relating dialgebras to the so-called \emph{abstract logics} is possibly relevant. This research line should take advantage of, and extend, the many existing studies in the field of coalgebraic modal logic. 

\paragraph{Non-polynomial interaction functors}
Dialgebras are parametrised in the interaction and observation functors. Non-polynomial interaction functors, such as e.g. a probability distribution over the input values, could provide valuable case studies. Modulo the observation functor being ``probabilised'', too,  such dialgebras may be used to represent a kind of probabilistic Mealy machines, where the probability distribution of the input determines that of the output. It should be understood whether in the case of non-polynomial interaction functors there is some gain in expressive power w.r.t. coalgebras.

\paragraph{Minimisation}
Coalgebras have an elegant and simple minimisation procedure, based on \emph{iteration along the terminal sequence} and generalising partition refinement for automata. Are there canonical models in dialgebras? The results in \cite{PZ01} seem to point out that such a theory would be very difficult in the presence of so-called \emph{binary methods}, due to non-closure of bisimulations under union, and the lack of a final dialgebra. However, the (dialgebraic) bisimilarity quotient may still exist in interesting cases. More work is required on this side. The precise conditions when final dialgebras and bisimilarity quotients exist should be clarified. Also notice that in \cite{PZ01} $\F$ is assumed to be polynomial. Since we seek for non-polynomial interaction functors too, we expect that some work on the side of canonical models will be needed in order to understand how bisimilarity of dialgebra can be decided, possibly by finite representations derived from the definitions of $\F$ and $\B$.


\end{document}